\newtheorem{thm}{Theorem}[section]
\newtheorem*{thm*}{Theorem}
\newtheorem{lem}[thm]{Lemma}
\newtheorem{fact}[thm]{Fact}
\newtheorem{prop}[thm]{Proposition}
\newtheorem*{prop*}{Proposition}
\newtheorem*{cor*}{Corollary}
\theoremstyle{definition}
\newtheorem{defn}[thm]{Definition}
\newtheorem*{defn*}{Definition}
\newtheorem{question}[thm]{Question}
\newtheorem*{question*}{Question}
\newtheorem*{Pquestion*}{Popa's question}
\newtheorem*{conv*}{Convention}
\def\bb{\mathbb}
\def\bb{\mathbb}
\newcommand{\cstar}{$\mathrm{C}^*$}
\def\dotminussym#1#2{%
  \setbox0=\hbox{$\m@th#1-$}%
  \kern.5\wd0%
  \hbox to 0pt{\hss\hbox{$\m@th#1-$}\hss}%
  \raise.6\ht0\hbox to 0pt{\hss$\m@th#1.$\hss}%
  \kern.5\wd0}
\def \val{\operatorname{val}}
\def \pval{\operatorname{s-val}}
\newcommand{\mip}{\operatorname{MIP}}
\newcommand{\cqs}{C_q}
\newcommand{\cqa}{C_{qa}}
\newcommand{\cqc}{C_{qc}}
\begin{document}


\title{Approximate traces on groups and the quantum complexity class $\mip^{co,s}$}
\author{Isaac Goldbring and Bradd Hart}

\thanks{I. Goldbring was partially supported by NSF grant DMS-2504477.}

\address{Department of Mathematics\\University of California, Irvine, 340 Rowland Hall (Bldg.\# 400),
Irvine, CA 92697-3875}
\email{isaac@math.uci.edu}
\urladdr{http://www.math.uci.edu/~isaac}

\address{Department of Mathematics and Statistics, McMaster University, 1280 Main St., Hamilton ON, Canada L8S 4K1}
\email{hartb@mcmaster.ca}
\urladdr{http://ms.mcmaster.ca/~bradd/}

\begin{abstract}
    An open question in quantum complexity theory is whether or not the class $\mip^{co}$, consisting of languages that can be efficiently verified using interacting provers sharing quantum resources according to the quantum commuting model, coincides with the class $coRE$ of languages with recursively enumerable complement.  We introduce the notion of a qc-modulus, which encodes approximations to quantum commuting correlations, and show that the existence of a computable qc-modulus gives a negative answer to a natural variant of the aforementioned question.
\end{abstract}

\maketitle

\section{Introduction}

Throughout this note, $n$ and $m$ denote natural numbers that are at least $2$ and $[n]$ denotes the set $\{1,\ldots,n\}$ (and likewise for $[m]$).

We recall the following definitions from quantum information theory and quantum complexity theory.

\begin{defn}

\

\begin{enumerate}
\item The set $\cqs(n,m)$ of \textbf{quantum correlations} consists of the correlations of the form $p(i,j|v,w)= \langle A^v_i \otimes B^w_j\xi,\xi \rangle$ for $v,w\in [n]$ and $i,j\in [m]$,
where H is a finite-dimensional Hilbert space, $\xi \in H \otimes H$ is a unit vector, and for every $v,w \in [n]$,
$(A^v_i: i\in [m])$ and $(B^w_j \ : \ j\in [m])$ are positive operator-valued measures (POVMs) on H.
\item We set $\cqa(n,m)$ to be the closure in $[0,1]^{n^2m^2}$ of $\cqs(n,m)$.
\item The set $\cqc(n,m)$ of \textbf{quantum commuting correlations} consists of the correlations of the form $p(i,j|v,w)= \langle A^v_i B^w_j\xi,\xi \rangle$ for $v,w \in [n]$ and $i,j\in [m]$,
where H is a separable Hilbert space, $\xi \in H$ is a unit vector, and for every $v,w\in [n]$,
$(A^v_i: i\in [m])$ and $(B^w_j \ : \ j\in [m])$ are POVMs on H for which $A^v_i B^w_j=B^w_jA^v_i$ for all $v,w\in [n]$ and $i,j\in [m]$.
\end{enumerate}
\end{defn}

Note that $C_{q}(n,m)\subseteq C_{qc}(n,m)$; since it is know that the latter set is closed, we in fact have that $C_{qa}(n,m)\subseteq C_{qc}(n,m)$.  Tsirelson's problem in quantum information theory asks if these latter two sets in fact coincide; a negative resolution to this problem follows from the main result of \cite{MIP*}.

\begin{defn}

\

\begin{enumerate}
    \item A \textbf{nonlocal game with $n$ questions and $m$ answers} is a pair $\frak G=(\mu,D)$, where $\mu$ is a probability distribution on $[n]\times [n]$ and $$D:[n]\times [n]\times [m]\times [m]\to \{0,1\}$$ is a function.
\item Given a nonlocal game $\frak G$ as in the previous item and $p\in [0,1]^{n^2m^2}$, we define the \textbf{value of $\frak G$ at $p$} to be $$\val(\frak G,p):=\sum_{v,w\in [n]}\mu(v,w)\sum_{i,j\in [m]}D(v,w,i,j)p(i,j|v,w).$$
\end{enumerate}
\end{defn}

\begin{defn}
Given a nonlocal game $\frak G$ with $n$ questions and $m$ answers, we define:
\begin{enumerate}
    \item the \textbf{entangled value of $\frak G$} to be 
    $$\val^*(\frak G):=\sup_{p\in \cqa(n,m)}\val(\frak G,p);$$
    \item the \textbf{quantum commuting value of $\frak G$} to be $$\val^{co}(\frak G):=\sup_{p\in \cqc(n,m)}\val(\frak G,p).$$
\end{enumerate}
\end{defn}

\begin{defn}
A language $L$ (in the sense of complexity theory) belongs to the class $\mip^*$ if there is an effective mapping $z\mapsto \frak G_z$ from strings to nonlocal games such that:
\begin{itemize}
    \item if $z\in L$, then $\val^*(\frak G_z)\geq \frac{2}{3}$
    \item if $z\notin L$, then $\val^*(\frak G_z)\leq \frac{1}{3}$.
\end{itemize}
The complexity class $\mip^{co}$ is defined in the analogous fashion, using $\val^{co}$ instead of $\val^*$.
\end{defn}

The following landmark result in quantum complexity theory appears in \cite{MIP*}:

\begin{thm}
$\mip^*=\operatorname{RE}$.  In other words, the languages that belong to $\mip^*$ are precisely the recursively enumerable languages, that is, those languages $L$ for which there is an algorithm that enumerates $L$.
\end{thm}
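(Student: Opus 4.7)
The theorem asserts two containments: $\mip^* \subseteq \operatorname{RE}$ and $\operatorname{RE} \subseteq \mip^*$. The first (upper bound) direction is the classical one and proceeds by an enumeration argument. Given the effective mapping $z \mapsto \frak G_z$, one enumerates, over increasing dimension $d$ and precision $1/k$, an $\epsilon$-net of finite-dimensional tensor-product strategies using POVMs on $\mathbb C^d \otimes \mathbb C^d$, computes the resulting correlations and their values on $\frak G_z$ up to error $1/k$, and accepts $z$ the moment some strategy achieves value at least, say, $\frac{2}{3} - \frac{1}{2k}$. Since $\cqs(n,m)$ is dense in $\cqa(n,m)$, whenever $z \in L$ such a witness is eventually discovered, yielding a recursive enumeration of $L$.

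The real substance is the reverse containment $\operatorname{RE} \subseteq \mip^*$. Since every recursively enumerable language reduces to the halting problem, it suffices to construct a computable map $M \mapsto \frak G_M$ from Turing machines to nonlocal games satisfying: $M$ halts on empty input iff $\val^*(\frak G_M) \geq \frac{2}{3}$, with $\val^*(\frak G_M) \leq \frac{1}{3}$ otherwise. My plan is the recursive compression approach. One constructs a uniform \emph{compression} procedure $\operatorname{Comp}$ that, on input a description of a nonlocal game $\frak G$ with $n$ questions, outputs a new game $\operatorname{Comp}(\frak G)$ with only $\operatorname{polylog}(n)$ questions and a robust relation $\val^*(\operatorname{Comp}(\frak G)) \approx \val^*(\frak G)$ (with the two value thresholds preserved). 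Given such $\operatorname{Comp}$, one then invokes a Kleene-style fixed point: define $\frak G_M$ to be the compression of a game that simulates $M$ for a number of steps growing with the question size of $\frak G_M$ itself. Completeness propagates: if $M$ halts, then at sufficiently large time bounds the simulation game becomes winnable, and this lifts through the compression. Soundness forces the converse: if $M$ never halts, no honest prover strategy can succeed, and compression preserves this obstruction.

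The main obstacle, and the technical heart of the argument, is constructing a compression theorem with sufficiently tight completeness-soundness gap to be iterable. This requires stitching together several deep ingredients: first, \emph{self-testing rigidity} results for anchor subgames (e.g.\ the Magic Square game and Pauli braiding tests) ensuring that near-optimal quantum strategies must, up to local isometry, act as Pauli measurements on a tensor power of EPR pairs; second, a \emph{quantum low-degree test} that allows the verifier to certify, by querying only a handful of positions, that the provers' measurements correspond to the evaluation of a low-degree polynomial encoding of an exponentially larger computation; and third, \emph{introspection protocols}, in which the provers generate the verifier's questions themselves from shared entanglement (certified by the rigidity above), so the verifier need never transmit a long question string in full. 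Assembling these into a single compression whose answer-reduction and question-reduction phases compose without blowing up the soundness gap, and then verifying that the resulting fixed-point game genuinely encodes halting, is where essentially all the difficulty of \cite{MIP*} lies; I would expect most of the proof effort to go into steady bookkeeping of error parameters through these nested reductions rather than into any single conceptual leap.
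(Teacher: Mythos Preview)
Your outline is a reasonable high-level sketch of the actual argument of \cite{MIP*} (upper bound by exhaustive search over finite-dimensional strategies; lower bound via recursive compression built from rigidity/self-testing, the quantum low-degree test, introspection/question reduction, and answer reduction, assembled so that a Kleene-type fixed point encodes the halting problem). There is nothing mathematically wrong with the plan at the level of detail you give, and you correctly identify where the real work lies.

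However, there is nothing to compare it to in the present paper. The paper does not prove this theorem; it merely records it as the main result of \cite{MIP*} and uses it as background. So your proposal is not being measured against any proof here---the paper's ``proof'' is simply the citation. If the intent of the exercise is to reconstruct the paper's own argument, the honest answer is that the paper offers none: this theorem is quoted, not established, and the paper's original contribution (Theorem~\ref{maintheorem}) concerns $\mip^{co,s}$, not $\mip^*$.
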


In particular, $\mip^*$ contains undecidable problems (such as the halting problem).

In \cite{MIP*}, it was pointed out that every element of the class $\mip^{co}$ is co-recursively enumerable, that is, the complement of $\mip^{co}$ is recursively enumerable.  Denoting by $\operatorname{coRE}$ the complexity class of langauges that are co-recursively enumerable, the authors of \cite{MIP*} ask whether or not the aforementioned inclusion is actually an equality:

\begin{question}\label{mainquestion}
Does $\mip^{co}=\operatorname{coRE}$?
\end{question}

The main result of this note is that a very natural variant of the above question has a negative answer provided a certain computability assumption about ``approximate'' quantum commuting correlations holds, and that, in fact, all languages that belong to this natural variant of $\mip^{co}$ are actually decidable.

First, we recall the following definitions:

\begin{defn}

\

\begin{enumerate}
\item Given $p\in [0,1]^{n^2m^2}$, we say that $p$ is \textbf{synchronous} if $p(i,j|v,v)=0$ for all $v\in [n]$ and all \emph{distinct} $i,j\in [m]$.  
\item We let $\cqa^s(n,m)$ and $\cqc^s(n,m)$ denote the synchronous elements of $\cqa(n,m)$ and $\cqc(n,m)$ respectively.  \item Given a nonlocal game $\frak G$, we let $\pval^*(\frak G)$ and $\pval^{co}(\frak G)$ denote the corresponding \textbf{synchronous values} of $\frak G$, which are defined analogously to $\val^*(\frak G)$ and $\val^{co}(\frak G)$, except that we only take the supremum over $C^s_{qa}(n,m)$ and $C^s_{qc}(n,m)$ respectively.
\item We define the complexity class $\mip^{*,s}$ and $\mip^{co,s}$ analogously to $\mip^*$ and $\mip^{co}$, this time using the appropriate synchronous values of the games in the definition.
\end{enumerate}
\end{defn}

The main result of \cite{MIP*} actually shows that $\mip^{*,s}$ coincides with $RE$ and thus a reasonable variant of Question \ref{mainquestion} above is whether or not $\mip^{co,s}$ coincides with coRE.

We will see later in Proposition \ref{lemma2} that synchronous commuting correlations satisfy a certain ``stability'' property, namely that correlations that are almost quantum commuting correlations (in a certain technical sense) are near actual quantum commuting correlations.  The main result of this note (Theorem \ref{maintheorem} below) will show that if this stability relation can be realized ``effectively,'' then all languages in $\mip^{co,s}$ are in RE, and are thus decidable.  Since there are elements of coRE that are undecidable (such as the complement of the halting problem), we would obtain a negative solution to the synchronous version of Question \ref{mainquestion} above.  Said in the opposite direction:  if it turns out that $\mip^{co,s}=$ coRE, then there is no effective version of stability for almost quantum commuting correlations. 

The authors would like to thank Alec Fox, Thomas Vidick, and Henry Yuen for helpful conversations around this work.

\section{Preliminaries: traces on groups and group \cstar-algebras}

Fix a countable group $G$ and let $\bb CG$ denote the corresponding group ring.  We recall the following terminology:

\begin{defn}
A function $\tau:G\to \bb C$ is called:
\begin{enumerate}
    \item \textbf{of positive type} if for all $\sum_{\lambda}a_\lambda u_\lambda\in \bb CG$, we have $\sum_{\lambda,\gamma}\overline{a_\lambda}a_\gamma \tau(\lambda^{-1}\gamma)\geq 0$
    \item a \textbf{class function} if $\tau$ is constant on conjugacy classes.
\end{enumerate}
\end{defn}

The following terminology is not standard, but convenient for our purposes.  In what follows, $\bb D$ denotes the unit disc in the complex plane.

\begin{defn}\label{tracedef}
A function $\tau:G\to \bb D$ is called a \textbf{trace} on $G$ if it is a class function of positive type.
\end{defn}

Here is the reason for the abuse in terminology.  Below, $C^*(G)$ denotes the universal \cstar-algebra of the group $G$.

\begin{fact}
Given a trace $\tau$ on $C^*(G)$, its restriction to $G$ is a trace on $G$ (in the sense of Definition \ref{tracedef}).  Moreover, the map $\tau\mapsto \tau|G$ is a bijection between traces on $C^*(G)$ and traces on $G$.
\end{fact}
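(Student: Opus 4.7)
The plan is to handle the two assertions, the forward direction and the bijection, as three separate steps, with most of the work occurring in proving surjectivity via a standard GNS-type construction.

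First, for the forward direction, I would check that if $\tau$ is a trace on $C^*(G)$ (meaning a contractive tracial positive linear functional), then $\tau|G$ satisfies the three required properties. Positivity of type is immediate: for $a=\sum_\lambda a_\lambda u_\lambda \in \bb CG$, the element $a^*a\in C^*(G)$ is positive, and expanding $a^*a=\sum_{\lambda,\gamma}\overline{a_\lambda}a_\gamma \lambda^{-1}\gamma$ and applying $\tau$ yields the sum in Definition 2.2. The class function property follows from the tracial property, since $g$ and $hgh^{-1}$ are related as $hg\cdot h^{-1}$ and $h^{-1}\cdot hg$ in $C^*(G)$. Finally, $\tau|G$ takes values in $\bb D$ because every $g\in G$ has norm $1$ in $C^*(G)$ and $\tau$ is contractive.

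Next, injectivity of the restriction map is straightforward: since $\bb CG$ is norm-dense in $C^*(G)$ and any two traces agreeing on $G$ agree on $\bb CG$ by linearity, continuity forces them to agree on all of $C^*(G)$.

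The main content is surjectivity. Given a trace $\tau : G \to \bb D$ in the sense of Definition \ref{tracedef}, I would first extend $\tau$ linearly to the group ring $\bb CG$, and then perform the GNS-type construction: define a sesquilinear form on $\bb CG$ by $\langle a, b\rangle_\tau := \tau(b^*a)$, which is positive semidefinite by positivity of type; quotient by the null space $N = \{a : \tau(a^*a)=0\}$ and complete to obtain a Hilbert space $H_\tau$ with a cyclic vector $\xi$ coming from the identity of $\bb CG$. Left multiplication by elements of $G$ preserves the form (since $\tau$ is of positive type and $g^{-1}g = e$), and hence descends to a unitary representation $\pi : G \to U(H_\tau)$, which then extends to a $*$-representation $\pi : \bb CG \to B(H_\tau)$ satisfying $\tau(a) = \langle \pi(a)\xi, \xi\rangle$. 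By the universal property of $C^*(G)$, $\pi$ extends to a $*$-representation of $C^*(G)$, and so $\tau$ extends continuously by the same formula to a positive contractive linear functional $\tilde\tau$ on $C^*(G)$. The trace property of $\tilde\tau$ reduces, by continuity and linearity, to the identity $\tau(\lambda\gamma)=\tau(\gamma\lambda)$ for $\lambda,\gamma\in G$, which holds because $\lambda\gamma$ and $\gamma\lambda$ are conjugate in $G$ and $\tau$ is a class function.

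The step most deserving of care is the continuity bound $|\tau(a)|\leq \|a\|_{C^*(G)}$ for $a\in \bb CG$, which is exactly what the GNS machinery combined with the universality of $C^*(G)$ provides: once $\pi$ is constructed on $\bb CG$, it automatically factors through $C^*(G)$ and is norm-decreasing, forcing the required bound. I do not anticipate a genuine obstacle here, as each step is classical; the entire argument is essentially a restatement of the standard correspondence between positive-definite functions on groups and states on the full group $C^*$-algebra, with the class function / tracial refinement tracked throughout.
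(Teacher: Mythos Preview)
Your proposal is correct and follows the standard GNS-based argument. Note, however, that the paper does not actually supply a proof of this statement: it is recorded as a \emph{Fact} and left unproven, presumably because the correspondence between positive-definite class functions on $G$ and tracial states on $C^*(G)$ is classical. Your write-up is exactly the expected justification; there is nothing to compare against in the paper itself.
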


In the sequel, we will freely abuse notation and use $\tau$ to denote both the trace on the group $G$ as well as the corresponding trace on $C^*(G)$.

Note that in the definition of being positive type, we can restrict attention to elements of the subring $\bb Q(i)G$ without changing the notion.  Assuming then that some countable enumeration $G=(g_0,g_1,g_2,\ldots)$ of $G$ has been given, there is thus an effectively enumerable countable list $(R_l)$ of requirements that characterize when a function $\tau:G\to \bb D$ is a trace on $G$.  Note really that this list of requirements is independent of the group $G$ in question and just depends on some fixed effective coding of $\bb Q(i)$.  

We now consider ``relaxations'' of these requirements:

\begin{defn}
Fix $k,l\in \bb N$ with $k\geq 1$.
\begin{enumerate}
    \item If $R_l$ is the requirement $\sum_{\lambda,\gamma}\overline{a_\lambda}a_\gamma \tau(\lambda^{-1}\gamma)\geq 0$, then we define the relaxed requirement $R_l^k$ to be that $\sum_{\lambda,\gamma}\overline{a_\lambda}a_\gamma \tau(\lambda^{-1}\gamma)$ is within $\frac{1}{k}$ of the positive real axis.
    \item If $R_l$ is the requirement $\tau(\gamma^{-1}\lambda\gamma)=\tau(\lambda)$, then we define the relaxed requirement $R_l^k$ to be $|\tau(\gamma^{-1}\lambda\gamma)-\tau(\lambda)|<\frac{1}{k}$.
\end{enumerate}
We say that $\tau:G\to \bb D$ is a \textbf{$k$-approximate trace on $G$} if the relaxed requirements $R_1^k,\ldots,R_k^k$ hold.
\end{defn}

The following lemma is obvious:

\begin{lem}\label{lemma1}
For each $k\in \bb N$, there is $\delta>0$ such that, for all functions $\tau,\tau':G\to \bb D$, if $\tau$ is a trace on $G$ and $\|\tau-\tau'\|_\infty<\delta$, then $\tau'$ is a $k$-approximate trace on $G$.  Moreover, $\delta$ depends only on $k$ and not on $G$ and this dependence is computable from $k$.
\end{lem}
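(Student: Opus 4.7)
The plan is to handle the finitely many relaxed requirements $R_1^k,\ldots,R_k^k$ one at a time and take the worst-case tolerance as $\delta$. Each $R_l$ has one of two shapes, and I would treat them separately.

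For a class function requirement $R_l$ of the form $\tau(\gamma^{-1}\lambda\gamma)=\tau(\lambda)$, a trivial triangle inequality argument shows that if $\|\tau-\tau'\|_\infty<\delta$ then $|\tau'(\gamma^{-1}\lambda\gamma)-\tau'(\lambda)|\le 2\delta$, so to force $R_l^k$ it suffices to ask $2\delta<1/k$. For a positive-type requirement $R_l$ associated with some $\sum_\lambda a_\lambda u_\lambda\in\bb Q(i)G$, I would estimate
$$\left|\sum_{\lambda,\gamma}\overline{a_\lambda}a_\gamma\big(\tau(\lambda^{-1}\gamma)-\tau'(\lambda^{-1}\gamma)\big)\right|\le \delta\Big(\sum_\lambda|a_\lambda|\Big)^{2}.$$
Since $\sum_{\lambda,\gamma}\overline{a_\lambda}a_\gamma\tau(\lambda^{-1}\gamma)$ is already a non-negative real number, the perturbed sum lies within $\delta\big(\sum_\lambda|a_\lambda|\big)^{2}$ of the positive real axis, and demanding this be at most $1/k$ gives the required tolerance for $R_l^k$.

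To make $\delta$ work uniformly for all $l\le k$, let $M_k$ be the maximum of $\sum_\lambda|a_\lambda|$ taken over those $l\le k$ whose requirement is of positive type (and set $M_k:=1$ if no such $l$ appears). Then
$$\delta:=\frac{1}{2k\cdot\max(M_k^{2},2)}$$
does the job. Computability of $\delta$ from $k$, and independence from $G$, follow from the authors' observation that the list $(R_l)$ depends only on the fixed effective coding of $\bb Q(i)$: the coefficient data entering each $R_l$ for $l\le k$ is extracted algorithmically from $l$, so $M_k$, and hence $\delta$, is computable from $k$ alone.

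There is no genuine obstacle here; the argument is a routine perturbation estimate, and the only point requiring any care is the bookkeeping verification that the ``shape'' of each $R_l$ (the $\bb Q(i)$-coefficients appearing in it) can be recovered effectively from $l$, independently of $G$, so that the maximum $M_k$ is in fact a computable function of $k$.
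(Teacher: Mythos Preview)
Your argument is correct and is exactly the routine perturbation estimate one would expect; the paper itself gives no proof, simply declaring the lemma ``obvious.'' Your explicit bookkeeping (the bound $\delta\bigl(\sum_\lambda|a_\lambda|\bigr)^2$ for the positive-type requirements, $2\delta$ for the class-function requirements, and the computable choice $\delta=\tfrac{1}{2k\max(M_k^2,2)}$) fills in precisely the details the authors omitted.
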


\section{The groups $\bb F(n,m)$}

Below, for $n,m\geq 2$, we let $\bb F(n,m)$ denote the group freely generated by $n$ elements of order $m$.  In the \cstar-algebra $C^*(\bb F(n,m))$, for each $v\in [n]$, we let $e^{n,m}_{v,1},\ldots,e^{n,m}_{v,m}$ denote the projections onto the eigenspaces corresponding to the eigenvalue $\xi_m^i$ of the unitary operator corresponding to the $v^{\text{th}}$ generator $u^{n,m}_v$ of $\bb F(n,m)$, where $\xi_m$ denotes a primitive $m^{\text{th}}$ root of unity. We then have that $(u_v^{n,m})^j=\sum_{i=1}^m \xi_m^{ji} e^{n,m}_{v,i}$ for each $v\in [n]$ and $j\in [m]$.

\begin{defn}
Given $p\in [0,1]^{n^2m^2}$, we say that a trace $\tau$ on $\bb F(n,m)$ is \textbf{adapted to $p$} if $p(i,j|v,w)=\tau(e^{n,m}_{v,i}e^{n,m}_{w,j})$ for all $v,w\in [n]$ and all $i,j\in [m]$. 
\end{defn}

Here is the key fact relating traces on $\bb F(n,m)$ and quantum commuting correlations:

\begin{fact}(\cite{vern})
For $p\in [0,1]^{n^2m^2}$, we have $p\in \cqc^s(n,m)$ if and only if there is a trace $\tau$ on $\bb F(n,m)$ that is adapted to $p$.
\end{fact}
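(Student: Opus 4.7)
The plan is to route through tracial states on $C^*(\bb F(n,m))$ via the bijection with traces on $\bb F(n,m)$ already recalled, together with the observation that $C^*(\bb F(n,m))$ is the universal unital C$^*$-algebra for $n$ partitions of the identity by $m$ projections $\{e^{n,m}_{v,i}\}_{i=1}^m$. This last fact is immediate from spectral decomposition: a unitary of order $m$ is, up to the characters $\xi_m^i$, the same data as an $m$-element partition of unity by projections, so universality of $\bb F(n,m)$ as a group transfers to universality of $C^*(\bb F(n,m))$ for this collection of relations.

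For the direction ($\Leftarrow$), given an adapted trace $\tau$, perform the tracial GNS construction on $C^*(\bb F(n,m))$ to obtain $(\pi, H_\tau, \xi_\tau)$, and let $M := \pi(C^*(\bb F(n,m)))''$ with modular conjugation $J$. Define PVMs $A^v_i := \pi(e^{n,m}_{v,i}) \in M$ and $B^w_j := J\pi(e^{n,m}_{w,j})J \in M'$. These commute across sides automatically, and the standard tracial identity for $J$-conjugates in the tracial GNS representation gives
\[ \langle A^v_i B^w_j \xi_\tau, \xi_\tau \rangle = \tau(e^{n,m}_{v,i} e^{n,m}_{w,j}) = p(i,j|v,w). \]
Synchronicity comes for free, since $e^{n,m}_{v,i} e^{n,m}_{v,j} = 0$ for $i \neq j$.

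For the direction ($\Rightarrow$), suppose $p \in \cqc^s(n,m)$ is realized by commuting POVMs $(A^v_i), (B^w_j)$ on a separable Hilbert space $H$ with unit vector $\xi$. Synchronicity gives $\langle A^v_i B^v_j \xi, \xi\rangle = 0$ for $i \neq j$; positivity and the commutation $A^v_i B^v_j = B^v_j A^v_i$ force $(A^v_i)^{1/2}(B^v_j)^{1/2}\xi = 0$, whence $A^v_i B^v_j \xi = 0$. Combined with $\sum_j B^v_j = I$, this yields $A^v_i\xi = A^v_i B^v_i\xi = B^v_i\xi$ and $(A^v_i)^2 \xi = A^v_i \xi$. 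A standard dilation argument shows that, after restricting to the cyclic subspace $K := \overline{\mathcal{A}\xi}$, where $\mathcal{A}$ is the $*$-algebra generated by the $A^v_i$, the compressed $A^v_i$ become projections summing to $I_K$ and $\xi$ becomes a tracing vector for the compressed representation. The universal property of $C^*(\bb F(n,m))$ then provides a unital $*$-homomorphism $\Phi$ with $\Phi(e^{n,m}_{v,i}) = A^v_i|_K$, and $\tau := \langle \Phi(\,\cdot\,) \xi, \xi \rangle$ is a trace on $C^*(\bb F(n,m))$ adapted to $p$.

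The main obstacle is the claim in ($\Rightarrow$) that $\xi$ becomes a tracing vector on $\mathcal{A}$; this is the heart of the Paulsen--Severini--Stahlke--Todorov--Winter-type argument adapted to the commuting-operator setting, and it is precisely the content supplied by \cite{vern}. Once the tracing vector is secured, the remainder is formal bookkeeping via spectral projections and the recalled Fact identifying traces on $G$ with traces on $C^*(G)$.
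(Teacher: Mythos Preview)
Your argument is correct, and at bottom it relies on the same external input as the paper's proof, namely the synchronous--tracial characterization from \cite{vern}. The packaging, however, is genuinely different. The paper's proof is a two-line reduction: it quotes \cite{vern} for the statement that $p\in\cqc^s(n,m)$ if and only if $p(i,j|v,w)=\tau(p_{v,i}p_{w,j})$ for some tracial \cstar-algebra $(A,\tau)$ generated by PVMs $\{p_{v,i}\}$, and then simply pulls the trace back along the universal surjection $C^*(\bb F(n,m))\to A$ sending $e^{n,m}_{v,i}\mapsto p_{v,i}$. Both directions are thus black-boxed to \cite{vern} plus universality.

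You instead unpack each direction. For ($\Leftarrow$) you give a self-contained construction via tracial GNS and the modular conjugation $J$, which is more informative than the paper's citation and does not appeal to \cite{vern} at all for this half. For ($\Rightarrow$) you sketch the mechanism---synchronicity forces $A^v_i\xi=B^v_i\xi$, the $B$'s commute past the $A$'s, and on the cyclic subspace the $A^v_i$ become projections with $\xi$ tracial---while honestly flagging that the tracing-vector claim is exactly the \cite{vern} content. One small quibble: what you call a ``dilation argument'' is really a compression to the cyclic subspace $K$; the point is that $(A^v_i-(A^v_i)^2)^{1/2}$ kills $\xi$ and commutes with the $B$-words that rewrite $\mathcal A\xi$, so it kills all of $K$. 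Either way, your route is longer but more transparent; the paper's is a terse appeal to universality on top of the cited result.
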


\begin{proof}
In \cite{vern}, they show that $p\in \cqc^s(n,m)$ if and only if there is a tracial C*-algebra $(A,\tau)$ and a generating family of projections $p_{v,i}$ such that $\sum_{i=1}^m p_{v,i}=1$ for each $v=1,\ldots,n$ and such that $p(i,j|v,w)=\tau(p_{v,i}p_{w,j})$.  However, letting $\pi:C^*(\bb F(n,m))\to A$ be the surjective *-homomorphism determined by sending $e_{v,i}$ to $p_{v,i}$ and defining $\tau'$ on $C^*(\bb F(n,m))$ by $\tau'(x):=\tau(\pi(x))$, we obtain the equivalence with the above statement.
\end{proof}


\begin{lem}
There is a computable function $s:\bb N^7\to \bigcup_{n,m\geq 2}\bb Q(i)\bb F(n,m)$ so that, for each $n,m\geq 2$, each $v,w\in [n]$, each $i,j\in [m]$, and each $k\geq 1$, we have that $s(v,w,i,j,k,n,m)\in \bb Q(i)\bb F(n,m)$ and $\|e^{n,m}_{v,i}e^{n,m}_{w,j}-s(v,w,i,j,k,n,m)\|<\frac{1}{k}$, where the norm is the norm on the universal representation of $\bb C \bb F(n,m)$, that is, the norm on $C^*(\bb F(n,m))$.
\end{lem}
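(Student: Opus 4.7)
The plan is to express each spectral projection $e^{n,m}_{v,i}$ explicitly in terms of powers of the generating unitary $u^{n,m}_v$, then approximate the resulting complex coefficients by elements of $\bb Q(i)$. Inverting the identity $(u^{n,m}_v)^j=\sum_{i=1}^m \xi_m^{ji}e^{n,m}_{v,i}$ by the discrete Fourier transform on $\bb Z/m\bb Z$ gives the standard spectral formula
$$e^{n,m}_{v,i}=\frac{1}{m}\sum_{k=0}^{m-1}\xi_m^{-ki}(u^{n,m}_v)^k,$$
and multiplying two such expressions yields
$$e^{n,m}_{v,i}e^{n,m}_{w,j}=\frac{1}{m^2}\sum_{k,l=0}^{m-1}\xi_m^{-ki-lj}(u^{n,m}_v)^k(u^{n,m}_w)^l.$$
Thus the product $e^{n,m}_{v,i}e^{n,m}_{w,j}$ is an explicit $\bb C$-linear combination of $m^2$ group elements of $\bb F(n,m)$, each of norm one in $C^*(\bb F(n,m))$.

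The next step is to exploit the density and effective enumeration of $\bb Q(i)$ in $\bb C$, together with the uniform computability of the roots of unity $\xi_m=e^{2\pi i/m}$ in $m$, to choose computably, from the inputs $n,m,v,w,i,j,k$, elements $c_{k',l'}\in \bb Q(i)$ satisfying
$$\left|c_{k',l'}-\tfrac{1}{m^2}\xi_m^{-k'i-l'j}\right|<\frac{1}{km^2}\quad\text{for all }0\le k',l'\le m-1,$$
and then to define
$$s(v,w,i,j,k,n,m):=\sum_{k',l'=0}^{m-1}c_{k',l'}(u^{n,m}_v)^{k'}(u^{n,m}_w)^{l'}\in\bb Q(i)\bb F(n,m).$$
The required bound $\|e^{n,m}_{v,i}e^{n,m}_{w,j}-s(v,w,i,j,k,n,m)\|<\frac{1}{k}$ then follows immediately from the triangle inequality together with the unitarity of the $u^{n,m}_v$, bounding each of the $m^2$ error terms by $\frac{1}{km^2}$.

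No genuine obstacle is expected: the proof is essentially the bookkeeping of a spectral formula plus rational approximation. The one point that deserves a line of justification is the computability assertion; this reduces to the standard fact that rational approximations to $\cos(2\pi t/m)$ and $\sin(2\pi t/m)$ can be produced effectively and uniformly in $t$ and $m$, which, combined with a fixed effective encoding of elements of $\bb Q(i)\bb F(n,m)$ as finite formal sums, makes the construction of $s$ patently computable in all seven of its arguments.
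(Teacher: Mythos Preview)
Your proof is correct and follows essentially the same idea as the paper: invert the discrete Fourier relation $(u^{n,m}_v)^j=\sum_i \xi_m^{ji} e^{n,m}_{v,i}$ to express the spectral projections as polynomials in the generators with root-of-unity coefficients, then work uniformly in $n,m$. If anything, your version is more complete, since you explicitly carry out the additional step of approximating the coefficients $\tfrac{1}{m^2}\xi_m^{-k'i-l'j}\in\bb Q(\xi_m)$ by elements of $\bb Q(i)$ (as the statement demands), whereas the paper's proof stops at coefficients in $\bb Q(\xi_m)$ and leaves that passage implicit.
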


\begin{proof}
The theorem follows from the fact that the relation $(u_v^{n,m})^j=\sum_{i=1}^m \xi_m^{ji} e^{n,m}_{v,i}$ mentioned above can be effectively inverted to express each $e^{n,m}_{v,i}$ as a polynomial in the generator $u_v^{n,m}$ with coefficients in the computable field $\bb Q(\xi_m)$.  Moreover, this procedure is uniform in $n$ and $m$.
\end{proof}

In order to match notation, in what follows we will rewrite $s(v,w,i,j,k,n,m)$ as $s_{v,w,i,j,k}^{n,m}$.  Also, given a function $\tau:\bb F(n,m)\to \bb D$, we extend it to a function $\tau:\bb Q(i)\bb F(n,m)\to \bb C$ by linearity.

\begin{defn}
Fix $n,m\geq 2$, $p\in \bb [0,1]^{n^2m^2}$, and $k\geq 1$.  We say that a function (not necessarily a trace) $\tau:\bb F(n,m)\to \bb D$ is \textbf{$k$-adpated to $p$} if $$|p(i,j|v,w)-\tau(s^{n,m}_{v,w,i,j,k})|<\frac{1}{k}$$ for all $v,w\in [n]$ and all $i,j\in [m]$.  
\end{defn}

Here is the ``stability'' property satisfied by synchornous quantum commuting correlations:

\begin{prop}\label{lemma2}
Given $m,n\geq 2$ and $\epsilon>0$, there is $k\geq 1$ such that, for all $p\in [0,1]^{n^2m^2}$, if there is a $k$-approximate trace $\tau$ on $\bb F(n,m)$ that is $k$-adpated to $p$, then there is $p'\in \cqc^s(n,m)$ with $\|p-p'\|_\infty<\epsilon$. 
\end{prop}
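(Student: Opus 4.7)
The plan is an argument by contradiction via a pointwise-compactness limit argument. Suppose the conclusion fails for some $n, m \geq 2$ and $\varepsilon > 0$: then there exist sequences $(p_k)_k$ in $[0,1]^{n^2 m^2}$ and $(\tau_k)_k$ of functions $\mathbb{F}(n,m) \to \mathbb{D}$ with $\tau_k$ a $k$-approximate trace that is $k$-adapted to $p_k$, yet $\|p_k - p'\|_\infty \geq \varepsilon$ for every $p' \in \cqc^s(n,m)$. Compactness of $\mathbb{D}^{\mathbb{F}(n,m)}$ in the product topology together with compactness of $[0,1]^{n^2 m^2}$ allows me to pass to a subsequence for which $\tau_k \to \tau$ pointwise on $\mathbb{F}(n,m)$ and $p_k \to p$. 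The goal is to show that the limit $p$ lies in $\cqc^s(n,m)$, contradicting the standing assumption.

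First I would show that the limit $\tau$ is a genuine trace on $\mathbb{F}(n,m)$. Each exact requirement $R_l$ is a condition on a fixed finite $\mathbb{Q}(i)$-linear combination of values of $\tau$, so for all $k \geq l$ the relaxed requirement $R_l^k$ is satisfied by $\tau_k$ within error $1/k$; passing to the pointwise limit yields $R_l$ exactly. Hence $\tau$ is a trace and extends to a positive tracial linear functional $\tilde\tau$ on $C^*(\mathbb{F}(n,m))$, and by the Fact attributed to \cite{vern}, $p$ will belong to $\cqc^s(n,m)$ as soon as $\tau$ is adapted to $p$.

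The main remaining task is to check $\tilde\tau(e^{n,m}_{v,i} e^{n,m}_{w,j}) = p(i,j|v,w)$ for all $v,w,i,j$. From the $k$-adaptedness of $\tau_k$ to $p_k$ together with $p_k \to p$, it suffices to prove $\tau_k(s^{n,m}_{v,w,i,j,k}) \to \tilde\tau(e^{n,m}_{v,i} e^{n,m}_{w,j})$. I would exploit the explicit construction of $s^{n,m}_{v,w,i,j,k}$ from the preceding lemma: inverting $(u^{n,m}_v)^j = \sum_i \xi_m^{ji} e^{n,m}_{v,i}$ writes each $e^{n,m}_{v,i}$ as a polynomial in $u^{n,m}_v$ with $\mathbb{Q}(\xi_m)$-coefficients, so $e^{n,m}_{v,i} e^{n,m}_{w,j}$ is supported on the fixed finite set $F_{v,w} := \{u_v^a u_w^b : 0 \leq a,b < m\}$, and the natural rational-Gaussian approximation $s^{n,m}_{v,w,i,j,k}$ is supported on that same finite set with coefficients $\tilde c^{(k)}_{a,b}$ converging as $k \to \infty$ to the exact coefficients $c_{a,b}$ of $e^{n,m}_{v,i} e^{n,m}_{w,j}$ (since on the finite-dimensional span of $F_{v,w}$ the $C^*$-norm is equivalent to any coefficient norm). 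Because $F_{v,w}$ is finite, pointwise convergence $\tau_k \to \tau$ is automatically uniform on it, so
\[
\tau_k\bigl(s^{n,m}_{v,w,i,j,k}\bigr) = \sum_{a,b} \tilde c^{(k)}_{a,b}\, \tau_k(u_v^a u_w^b) \longrightarrow \sum_{a,b} c_{a,b}\, \tau(u_v^a u_w^b) = \tilde\tau\bigl(e^{n,m}_{v,i} e^{n,m}_{w,j}\bigr),
\]
which yields adaptedness and completes the contradiction.

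The main obstacle is exactly this last convergence: both $\tau_k$ and $s^{n,m}_{v,w,i,j,k}$ depend on $k$, and a priori the approximate trace $\tau_k$ need not be bounded in the $C^*$-norm on the group algebra. Extracting the finite-support structure from the explicit form of $e^{n,m}_{v,i}$ as a polynomial in the generator $u^{n,m}_v$ finesses this obstacle; without such structural input one would need to establish an approximate Cauchy--Schwarz inequality for $\tau_k$ on the sesquilinear form $(Y_1, Y_2) \mapsto \tau_k(Y_1^* Y_2)$ in order to get a bound of the form $|\tau_k(Y)| \lesssim \|Y\|_{C^*} + o_k(1)$ and conclude from there, which is possible but considerably more laborious.
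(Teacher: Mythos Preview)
Your argument is correct and follows essentially the same route as the paper: contradiction, diagonal compactness to extract a pointwise limit $\tau$ of the $\tau_k$ and a subsequential limit $p$ of the $p_k$, then verify that $\tau$ is a genuine trace adapted to $p$ so that $p\in\cqc^s(n,m)$. The paper simply asserts the adaptedness of the limit, whereas you carefully justify $\tau_k(s^{n,m}_{v,w,i,j,k})\to\tilde\tau(e^{n,m}_{v,i}e^{n,m}_{w,j})$ by exploiting the fixed finite support $F_{v,w}$ of the eigenprojection products and their rational approximants; this is a legitimate elaboration of a step the paper treats as routine.
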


\begin{proof}
Suppose that the lemma is false for some $m,n,\epsilon$, that is, for each $k\geq 1$, there is $p_k\in \bb [0,1]^{n^2m^2}$ for which there is a $k$-approximate trace $\tau_k$ on $\bb F(n,m)$ that is $k$-adapted to $p_k$ and yet $\|p_k-p\|_\infty\geq \epsilon$ for all $p\in \cqc^s(n,m)$.  Let $p$ be a subsequential limit of $p_k$.  Since $\tau_k$ is pointwise bounded, by passing to a subsequence if necessary, we may suppose that $\tau_k$ converges pointwise to a function $\tau:\bb F(n,m)\to \bb D$.  Note then that $\tau$ is an actual trace on $\bb F(n,m)$ that is adapted to $p$, whence $p\in \cqc^s(n,m)$.  Since $\|p-p_k\|_\infty<\epsilon$ for some $k
$, this is a contradiction.
\end{proof}

\section{The main theorem}

\begin{defn}
We say that $T:\bb N^2\to \bb N$ is a \textbf{qc-modulus} if it satisfies the conclusion of Proposition \ref{lemma2} for $\epsilon=\frac{1}{12}$, that is, for all $n,m\geq 2$ and all $p\in \bb [0,1]^{n^2m^2}$, if there is a $T(n,m)$-approximate trace $\tau$ on $\bb F(n,m)$ that is $T(n,m)$-adapted to $p$, then there is $p'\in C_{qc}^s(n,m)$ with $\|p-p'\|_\infty<\frac{1}{12}$.
\end{defn}

We let $X^{n,m}_k$ denote those $p\in \bb ([0,1]\cap \bb Q)^{n^2m^2}$ for which there is a $k$-approximate trace $\tau:\bb F(n,m)\to \bb Q(i)$ that is $k$-adpated to $p$.

The following lemma is clear:

\begin{lem}
Each $X^{n,m}_k$ is recursively enumerable, uniformly in $k$, $n$, and $m$.
\end{lem}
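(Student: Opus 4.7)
The plan is to observe that membership of a tuple $(n,m,k,p)$ (with $p$ Gaussian-rational) in the set $\{(n,m,k,p): p\in X^{n,m}_k\}$ is witnessed by a \emph{finite} piece of data, so the set can be semi-decided by a brute-force search uniform in $(n,m,k)$. First I would identify this finite witness explicitly, then describe the enumeration procedure.

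Each requirement $R_l$ from the effective enumeration mentioned earlier involves only finitely many elements of $\mathbb{F}(n,m)$: for a positivity requirement, those appearing in the underlying element of $\mathbb{Q}(i)\mathbb{F}(n,m)$; for a class-function requirement, the pair $(\lambda,\gamma)$ together with $\gamma^{-1}\lambda\gamma$. Thus the conjunction of $R_1^k,\ldots,R_k^k$ mentions only a finite, effectively computable subset of $\mathbb{F}(n,m)$; similarly the $k$-adaptedness condition on $p$ only involves the finite set of elements appearing in the $s^{n,m}_{v,w,i,j,k}$ for $v,w\in[n]$, $i,j\in[m]$, which is uniformly computable by the previous lemma. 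Let $F_{n,m,k}\subset\mathbb{F}(n,m)$ denote the (finite, uniformly computable) union of these sets. The crucial observation is that whether $\tau:\mathbb{F}(n,m)\to\mathbb{Q}(i)\cap\mathbb{D}$ is a $k$-approximate trace $k$-adapted to $p$ depends only on the restriction $\tau|_{F_{n,m,k}}$; hence $p\in X^{n,m}_k$ if and only if there is a partial function $\tau^*\colon F_{n,m,k}\to\mathbb{Q}(i)\cap\mathbb{D}$ satisfying the finite list of conditions, since we can always extend $\tau^*$ to the whole group arbitrarily (say by $0$) without affecting any of the relevant requirements.

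The semi-decision procedure then dovetails over Gaussian-rational candidates $p\in([0,1]\cap\mathbb{Q})^{n^2m^2}$ and over partial assignments $\tau^*\colon F_{n,m,k}\to\mathbb{Q}(i)\cap\mathbb{D}$ with bounded-denominator values. For each pair $(p,\tau^*)$ one decidably checks the finite list of strict $\mathbb{Q}(i)$-arithmetic inequalities encoding the $R_l^k$ ($l\leq k$) and the $k$-adaptedness bounds, enumerating $(n,m,k,p)$ upon success. The procedure is uniform in $(n,m,k)$ since $F_{n,m,k}$, the list $R_1^k,\ldots,R_k^k$, and the coding $s^{n,m}_{v,w,i,j,k}$ are all computable from $(n,m,k)$; any $p\in X^{n,m}_k$ is eventually enumerated because the restriction to $F_{n,m,k}$ of its witnessing trace, already $\mathbb{Q}(i)$-valued by the definition of $X^{n,m}_k$, eventually appears in the search. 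I do not anticipate a real obstacle here: the lemma is essentially a formal consequence of the finitary nature of the relaxed requirements and the effectivity of the coding $s^{n,m}_{v,w,i,j,k}$.
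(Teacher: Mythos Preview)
Your argument is correct and is exactly the reasoning the paper leaves implicit: the lemma is stated in the paper with no proof beyond the remark that it is ``clear,'' and your finitary-witness analysis spells out precisely why. The only point worth noting is that the paper's definition of $X^{n,m}_k$ requires $\tau$ to be a $k$-approximate trace, hence already $\mathbb{D}$-valued, so the codomain of $\tau$ (and of your $\tau^*$) is $\mathbb{Q}(i)\cap\mathbb{D}$ as you wrote; otherwise nothing needs adjustment.
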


Here is the main result of this note:

\begin{thm}\label{maintheorem}
If there is a computable qc-modulus $T:\bb N^2\to \bb N$, then every language in $\mip^{co,s}$ is in RE (and thus decidable).
\end{thm}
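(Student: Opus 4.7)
My plan is to show $\mip^{co,s} \subseteq \operatorname{RE}$, which, combined with the containment $\mip^{co,s} \subseteq \operatorname{coRE}$ already noted in the introduction, yields decidability of every language in $\mip^{co,s}$. Fix $L \in \mip^{co,s}$ witnessed by an effective map $z \mapsto \frak G_z$. On input $z$, the semi-decider computes $\frak G_z = (\mu,D)$, extracts the question/answer parameters $(n,m)$, and computes $k := T(n,m)$. It then enumerates rational witnesses $(p,\tau)$ with $p \in X^{n,m}_k$ and $\tau$ a rational-valued $k$-approximate trace on $\bb F(n,m)$ that is $k$-adapted to $p$; this enumeration is effective by the lemma just before Theorem \ref{maintheorem}. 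For each such pair the machine computes the rational $\val(\frak G_z,p)$ and halts as soon as it exceeds a fixed threshold $\alpha \in (1/3,2/3)$.

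For completeness, if $z \in L$ then $\pval^{co}(\frak G_z) \geq 2/3$, so by compactness of $\cqc^s(n,m)$ (closed in $[0,1]^{n^2m^2}$) and the Fact of Section 3 there exist $p^* \in \cqc^s(n,m)$ with $\val(\frak G_z,p^*) \geq 2/3$ and an adapted trace $\tau^*$ on $\bb F(n,m)$. Lemma \ref{lemma1} then produces a rational-valued $k$-approximate trace $\tau$ agreeing with $\tau^*$ to within a prescribed sup-norm tolerance on the finite set of group elements appearing in the relaxed requirements $R_1^k,\ldots,R_k^k$ and in the polynomials $s^{n,m}_{v,w,i,j,k}$. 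Defining $p$ by rationally rounding $\tau(s^{n,m}_{v,w,i,j,k})$ makes $\tau$ be $k$-adapted to $p$, and choosing the approximation fine enough forces $\val(\frak G_z,p)$ to sit arbitrarily close to $2/3$ and in particular above $\alpha$, so the semi-decider halts.

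For soundness, suppose $z \notin L$, so $\pval^{co}(\frak G_z) \leq 1/3$, and suppose for contradiction the semi-decider halts at some $(p,\tau)$ with $\val(\frak G_z,p) > \alpha$. Applying the qc-modulus to $p \in X^{n,m}_{T(n,m)}$ yields $p' \in \cqc^s(n,m)$ with $\|p - p'\|_\infty < 1/12$. A Lipschitz-type bound on $\val(\frak G_z,\cdot)$ in $\|\cdot\|_\infty$, using that $D \in \{0,1\}$, that $\mu$ is a probability distribution, and that both $p$ and $p'$ are (approximate) probability distributions on each $(v,w)$-slice, then transfers the lower bound on $\val(\frak G_z,p)$ to a lower bound on $\val(\frak G_z,p')$ strictly exceeding $1/3$, contradicting $\pval^{co}(\frak G_z) \leq 1/3$.

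The main obstacle I expect is the quantitative balance in the soundness step: the naive Lipschitz estimate $|\val(\frak G_z,p) - \val(\frak G_z,p')| \leq m^2\,\|p - p'\|_\infty$ scales with the answer alphabet $m$, and with the fixed constant $1/12$ built into the qc-modulus this can swamp the $1/3$ gap between the yes and no values when $m$ is large. To close this, I would either precede the enumeration with a standard MIP reduction that bounds the answer alphabet (or gap-amplifies the game) so the threshold arithmetic survives, or observe that the computable $T$ at $\epsilon = 1/12$ automatically yields a computable qc-modulus at arbitrary $\epsilon$, allowing $k$ and $\alpha$ to be chosen so that the yes and no cases separate uniformly in $(n,m)$.
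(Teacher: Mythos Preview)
Your approach is essentially the paper's: the same enumeration of rational witnesses in $X^{n,m}_k$ with $k=T(n,m)$, the same threshold test on $\val(\frak G_z,p)$, and the same soundness/completeness split. The paper takes the specific threshold $\tfrac12$ and, like you, approximates an adapted trace by a rational one via Lemma~\ref{lemma1} for completeness and invokes the qc-modulus for soundness.

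The quantitative obstacle you flag in the soundness step is genuine, and the paper's own proof glosses over it: from $\|p_l-p\|_\infty<\tfrac{1}{12}$ it writes $\val(\frak G_z,p)>\tfrac12-\tfrac{1}{12}$ directly, tacitly treating $\val(\frak G_z,\cdot)$ as $1$-Lipschitz in $\|\cdot\|_\infty$, whereas the actual constant is $m^2$. (Knowing that $p$ and $p'$ are near-correlations does not rescue this: the $\ell^1$ distance on a slice can still be of order $m^2\|p-p'\|_\infty$.) Of your two proposed repairs, (b) is unjustified as stated: Proposition~\ref{lemma2} is proved by a nonconstructive compactness argument, so a computable modulus at $\epsilon=\tfrac{1}{12}$ does not obviously yield one at smaller $\epsilon$. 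Repair (a) is plausible but would need a separate argument that the relevant answer-reduction or amplification preserves membership in $\mip^{co,s}$. The cleanest fix---for your write-up and for the paper's---is to absorb the $m$-dependence into the \emph{definition} of qc-modulus, requiring $\|p-p'\|_\infty<\tfrac{1}{12m^2}$ rather than $\tfrac{1}{12}$; with that change all the arithmetic goes through exactly as both you and the paper wrote it.
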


\begin{proof}
Fix a computable qc-modulus $T$ and suppose that $L\in \mip^{co,s}$.  Let $z\mapsto \mathfrak G_z$ be an efficient mapping from strings to nonlocal games such that:
\begin{itemize}
    \item if $z\in L$, then $\pval^{co}(\mathfrak G_z)\geq \frac{2}{3}$
    \item if $z\notin L$, then $\pval^{co}(\mathfrak G_z)\leq \frac{1}{3}$.
\end{itemize}

Here is the algorithm for enumerating $L$.  Given a string $z$, first determine the dimensions $n$ and $m$ for $\frak G_z$.  Set $k:=T(n,m)$ and let $(p_l)$ be a computable enumeration of $X^{n,m}_k$.  The algorithm then simply starts computing $\val(\frak G_z,p_l)$ for each $l$; if for some $l$ we see that $\val(\frak G_z,p_l)> \frac{1}{2}$, then we declare that $z\in L$. 

Soundness of the algorithm:  suppose that $\val(\mathfrak G_z,p_l)>\frac{1}{2}$.  By the choice of $k$, there is $p\in \cqc^s(n,m)$ such that $\|p_l-p\|_\infty<\frac{1}{12}$.  It follows that $\val^{co,s}(\frak G_z)\geq \val(\frak G_z,p)> \frac{1}{2}-\frac{1}{12}>\frac{1}{3}$, which tells us that $z\in L$.

Completeness of the algorithm:  suppose that $z\in L$ and take $p\in \cqc^s(n,m)$ such that $\val(\frak G_z,p)\geq \frac{2}{3}$.  Let $\tau_p$ be a trace on $\bb F(n,m)$ that is adapted to $p$, whence $|p(i,j|v,w)-\tau_p(s^{n,m}_{v,w,i,j,k})|\leq \|e^{n,m}_{v,i}e^{n,m}_{w,j}-s^{n,m}_{v,w,i,j,k}\|<\frac{1}{k}$ for all $v,w\in [n]$ and all $i,j\in [m]$.  Fix $\eta>0$ small enough and let $p'\in ([0,1]\cap \bb Qm)^{n^2m^2}$ and $\tau':\bb F(n,m)\to \bb D\cap \bb Q(i)$ be such that $\|p-p'\|_\infty,\|\tau_p-\tau'\|_\infty<\eta$.  By Lemma \ref{lemma1}, if $\eta$ is small enough, then $\tau'$ is a $k$-approximate trace on $\bb F(n,m)$.  Note also that $$|p'(i,j|v,w)-\tau'(s^{n,m}_{v,w,i,j,k})|<2\eta+\|e^{n,m}_{v_i}e^{n,m}_{w,j}-s^{n,m}_{v,w,i,j,k}\|<\frac{1}{k}$$ as long as $\eta$ is small enough.  (Note that the conditions on $\eta$ are effective in terms of $k$ and thus in terms of $n$ and $m$.)  It follows that $\tau'$ is $k$-adapted to $p'$, whence there is $l$ such that $p'=p_l$.  As long as $\eta<\frac{1}{12}$, we have that $\val(\frak G_z,p_l)\geq \frac{2}{3}-\frac{1}{12}>\frac{1}{2}$.  Consequently, the algorithm will tell us that $z\in L$. 
\end{proof}

We note that it remains a possibility that $\mip^{co,s}$ is properly contained in co-RE and yet there is no computable qc-modulus.



\begin{thebibliography}{99}

\bibitem{MIP*}  Z. Ji, A. Natarajan, T. Vidick, J. Wright and H. Yuen, MIP* = RE, preprint, arxiv 2001.04383

\bibitem{vern} V. Paulsen, S. Severini, D. Stahlke, I. Todorov, and A. Winter.  Estimating quantum chromatic numbers, Journal of Functional Analysis, vol. 270 (2016), 2180-2222.

\end{thebibliography}
\end{document}